\documentclass[a4paper,UKenglish,cleveref, autoref,thm-restate,nolineno]{socg-lipics-v2021}

\newcommand{\Dle}{\mathrel{\le_{\mathcal D}}}   
\usepackage{tikz}
\usetikzlibrary{positioning,fit,calc,decorations.pathreplacing}

\usepackage{todonotes}
\title{From One Solution to Many: An Oracle-Based FPT Framework for Diverse Solutions under Generalized Diversity Measures} 

\titlerunning{Oracle-Based FPT Algorithms for Generalized Diversity Measures} 

\author{Pradeesha Ashok}{International Institute of Information Technology Bangalore, Bengaluru, India}{pradeesha@iiitb.ac.in}{ https://orcid.org/0000-0003-2174-0051}{}
\author{Sobyasachi Chatterjee}{The Institute of Mathematical Sciences, Chennai, India}{sobyasachichatterjee@gmail.com}{https://orcid.org/0009-0002-5878-9975}{}
\author{Soumi Nandi}{The Institute of Mathematical Sciences, Chennai, India}{nandisoumi1@gmail.com}{https://orcid.org/0000-0002-1000-991X}{}
\author{Saket Saurabh}{The Institute of Mathematical Sciences, India  \and University of Bergen, Norway}{saket@imsc.res.in}{https://orcid.org/0000-0001-7847-6402}{}
\author{Priyanshu Tiwari}{International Institute of Information Technology Bangalore, Bengaluru, India}{priyanshu.tiwari@iiitb.ac.in}{}{}
\authorrunning{P.Ashok et al.} 

\Copyright{} 

\ccsdesc[300]{Theory of computation~Design and analysis of algorithms~Parameterized complexity and exact algorithms~Fixed parameter tractability}

\keywords{Diverse Solution, Parametereised complexity, Exact Empty Extension oracle, Max-Sum Diversity, Max-Min Diversity, Max-Coverage Diversity} 

\category{} 

\relatedversion{} 

\acknowledgements{I want to thank \dots}

\nolinenumbers 

\EventEditors{John Q. Open and Joan R. Access}
\EventNoEds{2}
\EventLongTitle{42nd Conference on Very Important Topics (CVIT 2016)}
\EventShortTitle{CVIT 2016}
\EventAcronym{CVIT}
\EventYear{2016}
\EventDate{December 24--27, 2016}
\EventLocation{Little Whinging, United Kingdom}
\EventLogo{}
\SeriesVolume{42}
\ArticleNo{23}

\usepackage{amssymb,amstext,amsmath}

\usepackage{mathrsfs} 
\usepackage{xspace}
\usepackage{diagbox}
\usepackage{pdfpages}
\usepackage{graphicx}
\usepackage{algorithm,float}

\usepackage{algpseudocode}

\usepackage{nicefrac}

\usepackage{comment}
\excludecomment{dontshow}

\usepackage{mathtools}

\usepackage{tabularx}
\usepackage{multirow}
\usepackage{booktabs}
\usepackage{array}
\usepackage{tcolorbox}
\usepackage{algpseudocode}

\usepackage{rotating}

\usepackage{multicol}
\usepackage{array}

\usepackage{colortbl}
\usepackage{tabulary}
\newcolumntype{K}[1]{>{\centering\arraybackslash}p{#1}}
\usepackage{xcolor}
\usepackage{mdframed}
\usepackage{subcaption}

\newcommand{\defparprob}[4]{
\begin{tcolorbox}[colback=yellow!5!white,colframe=gray!75!black]
  \begin{tabular*}{\textwidth}{@{\extracolsep{\fill}}lr} #1   \\ \end{tabular*}
  {\bf{Input:}} #2  \\
  {\bf{Parameter:}} #3  \\  
  {\bf{Question:}} #4
  \end{tcolorbox}
}

\newcommand{\nph}{{\sf NP}-hard\xspace}

\newcommand{\fpt}{{\sf FPT}\xspace}

\newcommand{\Oh}{\ensuremath{\mathcal{O}}\xspace}

\newcommand{\I}{\ensuremath{\mathcal{I}}\xspace}

\newcommand{\yes}{\textsf{Yes}}
\newcommand{\no}{\textsf{No}}

\newcommand{\D}{\mathbb{D}}

\newcommand{\hide}[1]{}

\usepackage{cleveref}

\usepackage [autostyle, english = american]{csquotes}
\MakeOuterQuote{"}
\usepackage{graphicx,color}
\usepackage{boxedminipage}
\usepackage{framed}
\usepackage{xifthen}
\usepackage{tikz} 
\usetikzlibrary{calc,math,patterns,shapes,backgrounds}
\tikzstyle{path} = [color=black,opacity=.30,line cap=round, line join=round, line width=10pt]

\newlength{\RoundedBoxWidth}
\newsavebox{\GrayRoundedBox}
\newenvironment{GrayBox}[1]%
   {\setlength{\RoundedBoxWidth}{.93\textwidth}
    \def\boxheading{#1}
    \begin{lrbox}{\GrayRoundedBox}
       \begin{minipage}{\RoundedBoxWidth}}%
   {   \end{minipage}
    \end{lrbox}
    \begin{center}
    \begin{tikzpicture}%
       \node(Text)[draw=black!20,fill=white,rounded corners,%
             inner sep=2ex,text width=\RoundedBoxWidth]%
             {\usebox{\GrayRoundedBox}};
        \coordinate(x) at (current bounding box.north west);
        \node [draw=white,rectangle,inner sep=3pt,anchor=north west,fill=white] 
        at ($(x)+(6pt,.75em)$) {\boxheading};
    \end{tikzpicture}
    \end{center}}     

\newenvironment{defproblemx}[2][]{\noindent\ignorespaces%
                                \FrameSep=6pt%
                                \parindent=0pt%
                \vspace*{-1.5em}
                \ifthenelse{\isempty{#1}}{%
                  \begin{GrayBox}{\textsc{#2}}%
                }{%
                  \begin{GrayBox}{\textsc{#2}  parameterized by~{#1}}%
                }
                \begin{tabular*}{\textwidth}{@{\hspace{.1em}} >{\itshape} p{1.8cm} p{0.8\textwidth} @{}}%
            }{
                \end{tabular*}%
                \end{GrayBox}%
                \ignorespacesafterend
            }  

\newcommand{\DP}{\textsc{$(\mathbb{D},\Phi)$-Diverse}\xspace}
\newcommand{\msd}{\textsc{Max-Sum Diversity}\xspace}
\newcommand{\mmd}{\textsc{Max-Min Diversity}\xspace}
\newcommand{\mcd}{\textsc{Max-Coverage Diversity}\xspace}

\newcommand{\phisubfor}{\textsc{$\Phi$-Forbidden Subset}\xspace}

\newcommand{\algophiforb}{\textsc{Algo-$\Phi$-ForbSubset}\xspace}
\newcommand{\algofeasible}{\textsc{Algo-$\Phi$-Feasible}\xspace}
\newcommand{\constraindiverse}{\textsc{Constraint-Diverse $\Phi$}\xspace}
\newcommand{\mainalgo}{\textsc{DiverseSolutions}\xspace}
\newcommand{\divf}{{\D}\xspace}
\newcommand{\val}{{\sf val}\xspace}

\newcommand{\phisub}{\textsc{$\Phi$-Subset}\xspace}
\newcommand{\phisubw}{\textsc{Weighted $\Phi$-Subset}\xspace}

\begin{document}

\maketitle

\begin{abstract}
The problem of computing \emph{diverse} solutions has recently emerged as a compelling area of study, driven by
applications in fairness, robustness, and security. Rather than returning a single feasible (or optimal) solution,
the goal is to output a \emph{collection} of solutions that are meaningfully different from one another, most often
measured via symmetric differences. A growing body of work has developed algorithms for diverse variants using a
variety of techniques, including sparsification, reductions to network flows, and algebraic methods.

We study the fixed-parameter tractability of diverse variants of an implicit set-system problem.
Given parameters $k$ and $r$ and a threshold $b$, the task is to compute $r$ feasible solutions, each of size at most $k$,
whose diversity value (under a specified objective) is at least $b$.
Our main contribution is a general oracle-based meta-theorem. We identify a broad class of objectives, which we call
\emph{consistently diverse}, that subsumes standard measures such as \msd, \mmd, and \mcd. 
Assuming access to an \emph{exact empty-extension oracle}---given a forbidden set ${\sf Forb}$, it either returns a feasible
solution of a prescribed size avoiding ${\sf Forb}$ or correctly reports that none exists---we obtain an \fpt algorithm
parameterized by $k+r$. Concretely, our algorithm makes at most $(2kr)^{kr}\cdot r$ oracle calls, and in every call the oracle
parameter satisfies $s+|{\sf Forb}|\le k+2kr$.

Our framework unifies and strengthens prior oracle-based approaches to diverse solutions. In particular, compared to the
framework of Kumabe (ESA~2025), which yields a doubly-exponential bound on the number of oracle calls, our approach achieves
a single-exponential bound $2^{\Oh(kr\log(kr))}$ and directly constructs the desired tuple of solutions.
As applications, we recover \fpt algorithms for all problems covered by Kumabe’s framework with improved oracle complexity,
and we obtain strong bounds for several concrete problems, including diverse variants of classical graph and matroid problems.
\end{abstract}

\newpage
\section{Introduction}

Many computational problems admit a large number of feasible (and often optimal) solutions.
In classical algorithmic settings, the goal is to compute a single such solution.
However, in several applications, returning only one solution is insufficient: one instead seeks a
\emph{collection} of solutions that are meaningfully different from one another.
This requirement gives rise to \emph{diversity problems}.
Informally, a diversity problem asks for multiple feasible solutions whose pairwise differences,
measured using a suitable diversity objective, are large~\cite{DBLP:conf/sat/Nadel11,DBLP:conf/ijcai/PetitT15}.

The motivation for studying diversity is both practical and conceptual.
In decision-support settings, diverse alternatives offer robustness against modeling uncertainty and
allow a user to choose among qualitatively different options.
In other contexts, diversity is tied to fairness and representation, or to security and redundancy.
These motivations have driven sustained interest in computing diverse sets of solutions with provable
guarantees~\cite{barocas-hardt-narayanan,Baste2022,DBLP:conf/hotos/ForrestSA97}.

\medskip
\noindent{\bf Diversity measures.}
A common way to quantify diversity is via \emph{Hamming distance}.
For two sets $S,T\subseteq U$, their Hamming distance is the size of the symmetric difference:
\[
d_H(S,T)\ :=\ |S\triangle T|\ =\ |S\setminus T|+|T\setminus S|.
\]
Two widely studied objectives are:
(i) \msd, which maximizes the sum of pairwise distances,
$d_{\mathrm{sum}}(S_1,\ldots,S_r):=\sum_{1\le i<j\le r} d_H(S_i,S_j)$; and
(ii) \mmd, which maximizes the minimum pairwise distance,
$d_{\mathrm{min}}(S_1,\ldots,S_r):=\min_{1\le i<j\le r} d_H(S_i,S_j)$.
Recent work also considers other measures, such as \emph{Venn diversity}~\cite{drabik2025findingdiversesolutionsparameterized}
and \emph{coverage/absolute-difference} diversity~\cite{deberg2025findingdiversesolutionscombinatorial,iwamasa2025generalframeworkfindingdiverse};
the coverage objective (\mcd) is
\[
d_{\mathrm{cov}}(S_1,\ldots,S_r)\ :=\ \Bigl|\bigcup_{i\in[r]} S_i\Bigr|.
\]

\medskip
\noindent{\bf Algorithmic developments.} 
From an algorithmic perspective, substantial progress has been made. 
Diversity variants have been studied for a wide range of underlying combinatorial problems, and
efficient algorithms have been obtained using techniques such as sparsification, reductions to flow
and matching, algebraic methods, and parameterized complexity.
As a result, papers on diversity problems have appeared in several major venues, reflecting both
their theoretical richness and practical relevance~\cite{Baste2019,Baste2022,deberg2024findingdiverseminimumst,EibenKW24}.
In particular, polynomial-time and fixed-parameter algorithms (discussed below) are known for several diverse variants,
and approximation algorithms have also been investigated for \nph  cases. Polynomial-time algorithms are known for diverse variants of problems such as
\textsc{Diverse Unweighted Minimum $s$--$t$ Cut} and \textsc{Diverse Stable Matching}~\cite{iwamasa2025generalframeworkfindingdiverse,deberg2024findingdiverseminimumst} for \msd and \mcd,
and for several restricted SAT classes~\cite{MisraMR26}.
Approximation algorithms have been investigated for some \nph diverse variants, including
\textsc{Diverse Longest Common Subsequences} and \textsc{Diverse and Nice Triangulations}~\cite{ShidaPKUA24,GalvezGMPT25}.

\medskip
\noindent{\bf Our focus: FPT algorithms.}
We focus on the development of \fpt algorithms for diversity problems.
In many diverse-solution settings, the most natural parameters are the solution size $k$ (or the budgets $k_1,\ldots,k_r$),
the number of requested solutions $r$, and the diversity threshold $b$.
Accordingly, much of the literature seeks algorithms running in time
$f(k,r,b)\cdot |I|^{\Oh(1)}$ (often with $b$ implicit or bounded as a function of $k$ and $r$),
and several results are already \fpt for the combined parameter $k+r$.
Here $I$ denotes the input instance and $f$ is a computable function; in this paper, we use $k+r$ as the primary parameter. See the book by Cygan et al.~\cite{Cygan2015} for further background on parameterized algorithms.

Early work establishing \fpt tractability for diverse variants was initiated by Baste et al.~\cite{Baste2019,Baste2022}, who showed that classical tools
can often be lifted to the diverse variant: lossless kernelization yields kernels for diverse variants,
and their dynamic-programming core model implies that problems solvable by treewidth-based DP remain \fpt
when augmented with a diversity objective and parameterized by treewidth and the number of requested solutions.
This viewpoint was extended beyond treewidth by Drabik et al.~\cite{drabik2025findingdiversesolutionsparameterized},
who generalized the DP-core approach to cliquewidth for MSO$_1$-expressible problems.

Complementing these structural results, problem-driven \fpt techniques have been developed as well~\cite{DBLP:journals/mp/FominGPPS24}. 
Hanaka et al.~\cite{HanakaKKO21} used color-coding~\cite{AlonYZ95} to obtain \fpt algorithms for diverse variants of
\textsc{Diverse $k$-Path}, \textsc{Diverse Matching}, and \textsc{Diverse Subgraph Isomorphism}
(for bounded-treewidth pattern graphs) under both Max-Sum and Max-Min objectives.
Baste et al.~\cite{Baste2019} gave \fpt algorithms for \textsc{Diverse $d$-Hitting Set} and
\textsc{Diverse Feedback Vertex Set} parameterized by $k+r$, via flow-based augmentation over minimal solutions.
On the algebraic side, Eiben et al.~\cite{eiben2025determinantal} introduced \emph{determinantal sieving},
obtaining \fpt algorithms for problems such as \textsc{Diverse Bases}, \textsc{Diverse Common Independent Sets}
on linear matroids, and \textsc{Diverse Perfect Matchings} using an oracle that counts solutions modulo $2$.
These results illustrate both the breadth of the \fpt toolbox for diversification and the diversity of techniques involved,
motivating the search for unifying principles and sharper running-time guarantees. Finally, diverse variants have also been studied for {\sc Satisfiability}.
Misra et al.~\cite{MisraMR26} investigated diverse SAT variants, showing {W[1]}-hardness for \textsc{Exact Differ SAT}
on affine formulas when parameterized by $d$ and $n-d$, and on $2$-CNF formulas when parameterized by $d$,
while obtaining an \fpt algorithm for \textsc{Max Differ SAT}.
Specifically, they define the problem \textsc{Max Differ SAT} (resp. \textsc{Exact Differ SAT}) as follows: Given a Boolean formula $\Phi$ on $n$ variables, decide whether $\Phi$ has two
satisfying assignments that differ on at least (resp. exactly) $d$ variables.

\smallskip
\noindent{\bf Kumabe's unifying framework.}
Motivated by the breadth of techniques used in the \fpt literature, Kumabe~\cite{kumabe2025maxdistancesparsificationdiversificationclustering}
introduced a unifying model that captures many existing diversity problems.
Under suitable conditions on the underlying solution space, the framework yields a general-purpose theorem that
lifts algorithms for a base problem to algorithms for its \mmd  variant, providing a systematic
explanation for several previously known results.
It also yields new \fpt algorithms for problems such as \textsc{Steiner Trees}, \textsc{Minimum-Edge $s$--$t$ Flows},
and \textsc{Almost 2-SAT}.
See Section~\ref{sec:techAndcomareKumabe} for a detailed comparison with our framework.

\smallskip
\noindent{\bf Our contribution.}
In this paper, we build on and significantly extend Kumabe's framework.
We propose a new general framework for diversity problems that \emph{encompasses} a broader class of diversity measures,
\emph{generalizes} and \emph{simplifies} the structural requirements, and yields algorithms with substantially improved
running times.
In particular, compared to Kumabe's framework---which yields a doubly-exponential dependence on the parameters in the
resulting \fpt running time---our framework reduces this to a \emph{single-exponential} dependence on the parameters.
Our results subsume several known algorithms as special cases and provide a unified and conceptually simpler explanation
for why efficient algorithms for diverse solutions exist across many settings.

\subsection{Our Results, Overview, and Applications}

We need a few definitions in order to state our formalism. We borrow the notion of
\emph{implicit set systems} from~\cite{DBLP:journals/jacm/FominGLS19}. An \emph{implicit set system} is a function $\Phi$ that takes as input a binary string
$I\in\{0,1\}^*$ and outputs a set system $(U_I,\mathcal{F}_I)$, where $U_I$ is a
universe and $\mathcal{F}_I\subseteq 2^{U_I}$ is a family of subsets of $U_I$.
We refer to $I$ as an \emph{instance}, write $n:=|U_I|$ for the size of the universe,
and $|I|$ for the encoding length of the instance.

We say that $\Phi$ is \emph{polynomial-time computable} if (i) there is a polynomial-time
algorithm that, given $I$, outputs $U_I$, and (ii) there is a polynomial-time algorithm
that, given $I$, $U_I$, and a subset $S\subseteq U_I$, decides whether $S\in\mathcal{F}_I$.
Throughout the paper, all implicit set systems are assumed to be polynomial-time computable.
One may view $\mathcal{F}_I$ as the set of all feasible solutions for the instance $I$.
Typical examples include the family of feedback vertex sets of size at most $k$ in a graph,
the family of satisfying assignments of a CNF formula of weight at most $W$, and the family
of minimal hitting sets of a set system.

\smallskip 
\noindent 
{\bf From one solution to many.}
Given $(U_I,\mathcal{F}_I)$, one can ask standard questions such as: is $\mathcal{F}_I$
non-empty; can we find a set in $\mathcal{F}_I$; or what is $|\mathcal{F}_I|$?
In this paper, we go beyond finding a single solution and instead seek \emph{$r$ solutions}
from $\mathcal{F}_I$ that are as diverse as possible under a chosen diversity measure.

\smallskip
\noindent 
{\bf Diversity functions.}
A \emph{diversity function} is a mapping
$\mathbb{D}:(2^{U_I})^{r}\to\mathbb{Z}_{\ge 0}$ that assigns a non-negative value to an
$r$-tuple $\mathbb{T}=(A_1,\ldots,A_r)$ of subsets of $U_I$.
Examples include $|\bigcup_{i=1}^r A_i|$ and $\sum_{1\le i<j\le r} |A_i\triangle A_j|$. Next we formally define the problem studied in this work.

\defparprob{\DP}
{An instance $I$, integers $k_1,\ldots,k_r\in\mathbb{N}$, and a positive integer $b$; let $k:=\max_{i\in[r]} k_i$.}
{$k+r$}
{Does there exist an $r$-tuple $(X_1,\ldots,X_r)\in \mathcal{F}_I^{\,r}$ such that
$|X_i|=k_i$ for all $i\in[r]$ and $\mathbb{D}(X_1,\ldots,X_r)\ge b$?}

When it is clear from the context, we will drop the subscript $I$ and simply write
$U$ and $\mathcal{F}$. Moreover, we will denote the family of feasible solutions
$\mathcal{F}_I$ by $\mathcal{D}$, and refer to $\mathcal{D}$ as the \emph{domain of feasible solutions}.


Our first conceptual contribution is a definition of a diversity function that captures most natural notions of diversity, including \msd, \mmd, and \mcd (see \Cref{lemma:generalise diversity}). Towards this end, we first define a notion that allows us to compare certain tuples of solutions to \DP. 

\begin{definition} \label{def:less_diverse}
Let $\mathbb{T}_1=(A_1,\ldots,A_r)$ and $\mathbb{T}_2=(J_1,\ldots,J_r)$ be two $r$-tuples of sets (over a common family of feasible solutions $\mathcal D$).
We write $\mathbb{T}_1 \Dle \mathbb{T}_2$ if there exist decompositions
\[
A_i = B_i \uplus C_i
\qquad\text{and}\qquad
J_i = K_i \uplus L_i
\qquad\text{for each } i\in[r],
\]
such that the following conditions hold:
\begin{enumerate}
    \item For every $i\in[r]$, we have $K_i \subseteq B_i$ and $|A_i|=|J_i|$.
    \item For all distinct $i,j\in[r]$, $L_i \cap L_j = \emptyset$.
    \item For every $i\in[r]$, $L_i \cap B_j = \emptyset$ for all $j\in[r]$.
\end{enumerate}
\end{definition}

\begin{figure}[ht]
    \centering
    \includegraphics[width=0.5\linewidth]{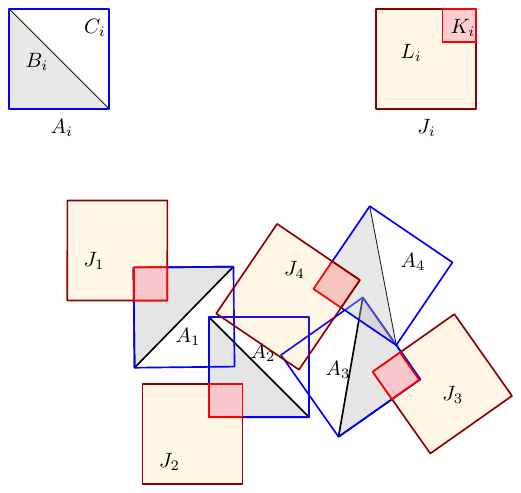}
    \caption{ Each $A_i$ consists of two disjoint parts: a grey component $B_i$ and a white component $C_i$. Similarly, each $J_i$ comprises two disjoint parts: a red component $K_i$ and a yellow component $L_i$. Let $\mathbb{T}_1=\{A_1,\dots, A_4\}$ and $ \mathbb{T}_2=\{J_1,\dots, J_4\}$. Then $\mathbb{T}_1 \Dle \mathbb{T}_2$.}
    \label{fig:def1}
\end{figure}

We interpret $\mathbb{T}_1 \Dle \mathbb{T}_2$ as saying that $\mathbb{T}_2$ is obtained from $\mathbb{T}_1$ by replacing some overlaps with \emph{fresh, non-overlapping} elements: in each coordinate $i$, we keep a subset $K_i$ from inside $A_i$, and the remaining part of $J_i$ (namely $L_i$) consists of elements that are pairwise disjoint across different coordinates and also disjoint from all core parts $B_j$. The set $C_i$ consists of those elements of $A_i$ that are the portion of $A_i$ that may be replaced by fresh, diversity-increasing elements in $J_i$.
Thus, $\mathbb{T}_2$ is ``at least as disjoint as'' $\mathbb{T}_1$ (see \cref{fig:def1} for reference).  Using this notion, we define the class of \emph{consistently diverse} diversity functions.


\begin{definition}[Consistently Diverse]\label{div_defn}
A diversity function $\mathbb{D} : (2^{U})^{r} \rightarrow \mathbb{Z}_{\ge 0}$ is \emph{consistently diverse} if for any two $r$-tuples
$\mathbb{T}_1=(A_1,\ldots,A_r)$ and $\mathbb{T}_2=(J_1,\ldots,J_r)$ of subsets of a common universe $U$, whenever $\mathbb{T}_1 \Dle \mathbb{T}_2$ we have $\mathbb{D}(\mathbb{T}_1) \le \mathbb{D}(\mathbb{T}_2)$. We call $\mathbb{D}$ \emph{polynomial-time computable} if, given any $r$ sets $S_1,\ldots,S_r \subseteq U$, the value $\mathbb{D}(S_1,\ldots,S_r)$ can be computed in time polynomial in $|U|$.
\end{definition}

For our main algorithmic result, we require access to an oracle for the following
subproblem.

\defparprob{\phisubfor}
{An instance $I$, an integer $s\in\mathbb{N}$, and a forbidden set ${\sf Forb}\subseteq U_I$.}
{$s+|{\sf Forb}|$}
{Does there exist a set $S\in \mathcal{F}_I$ such that $|S|=s$ and $S\cap {\sf Forb}=\emptyset$? If yes, output such a set $S$.}
That is, we ask for a feasible solution of prescribed size that avoids a given set of forbidden elements. Assuming that \phisubfor is fixed-parameter tractable parameterized by $s+|{\sf Forb}|$,
we obtain the following main result.

\begin{theorem}\label{thm:maxDisjoint}
Let $\mathbb{D}$ be a consistently diverse diversity function.
There is an algorithm for \DP that makes at most $(2kr)^{kr}\cdot r$
oracle calls to \phisubfor.
Moreover, in every oracle call, the parameter satisfies $s+|{\sf Forb}|\le k+2kr$.
Consequently, if \phisubfor admits an \fpt algorithm parameterized by $s+|{\sf Forb}|$,
then \DP is \fpt parameterized by $k+r$.
\end{theorem}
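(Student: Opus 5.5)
We will design a branching algorithm that maintains a partial $r$-tuple and repeatedly calls the oracle with carefully chosen forbidden sets. The key structural step is an exchange lemma: if some solution tuple $\mathbb{T}^*=(X_1^*,\ldots,X_r^*)$ with $|X_i^*|=k_i$ achieves $\mathbb{D}(\mathbb{T}^*)\ge b$, then there is another tuple $\mathbb{T}$ with $\mathbb{T}^*\Dle\mathbb{T}$ that the algorithm can find. Since $\mathbb{D}$ is consistently diverse, $\mathbb{D}(\mathbb{T})\ge b$. The algorithm then only needs to verify candidate tuples using polynomial-time computability of $\mathbb{D}$ and of $\Phi$.

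\textbf{The algorithm.} We build the solutions $X_1,\ldots,X_r$ one at a time; after computing each tuple, we check $\mathbb{D}$ directly. We maintain a set $B$ of "committed core" elements, always of size at most $kr$. At each step we call the oracle for an index $i$ with $s=k_i-|K_i|$ and ${\sf Forb}$ equal to the union of $B$ and all previously produced sets. This is meant to capture the fresh part $L_i$, which must avoid every core $B_j$ and every other $L_j$.

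To keep the oracle's answer feasible, we instead call the oracle on the instance restricted so that solutions contain $K_i$. Since the oracle only accepts forbidden sets, a cleaner version is the following. We first obtain $r$ candidate solutions greedily, each avoiding the union $F$ of all earlier outputs; $|F|\le kr$ at every stage, so ${\sf Forb}$ stays within the stated bound. If some call fails, a minimal obstruction is found: every solution of the optimal tuple $X_i^*$ must intersect $F$, a set of size at most $kr$.

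\textbf{Branching.} In that case we branch on an element $u\in F\cap X_i^*$, at most $2kr$ choices once the forbidden part is counted, and on the index. We move $u$ into $B$ and record it in $K_i$; the choice of index $i$ contributes the factor $r$. Each branch permanently pins one of the at most $kr$ elements of $\bigcup_i X_i^*$, giving depth at most $kr$. Hence there are at most $(2kr)^{kr}$ leaves, and $r$ oracle calls per leaf.

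At a leaf, each $X_i$ is $K_i$ together with a fresh set $L_i$ returned by the oracle. Each $L_i$ avoids $B=\bigcup_j K_j$ and the other $L_j$ because of the forbidden set. We set $B_i:=K_i$ (more precisely, $B_i:=X_i^*\cap B$) and $C_i:=X_i^*\setminus B_i$. Conditions 1--3 of Definition~\ref{def:less_diverse} then follow, and sizes match since $s$ was chosen as $k_i-|K_i|$. The forbidden set has size at most $|B|+kr\le 2kr$, so $s+|{\sf Forb}|\le k+2kr$.

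\textbf{Main obstacle.} The hardest step is making the oracle return a feasible set that contains $K_i$, since \phisubfor only handles exclusions. The plan is to avoid needing containment at all. Instead of extending $K_i$, we argue that if the oracle fails with forbidden set $B\cup(\text{other }L_j)$, then $X_i^*$ itself intersects that forbidden set. This yields a new element to branch on.

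If the oracle succeeds, it returns a whole solution $J_i$ disjoint from the forbidden set. We then take $K_i:=\emptyset$ and $L_i:=J_i$. This is legal because condition 1 only requires $K_i\subseteq B_i$, which $K_i=\emptyset$ satisfies, and the sizes $|J_i|=k_i$ match.

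Correctness therefore reduces to an invariant: along the correct branch, the set $B$ always stays inside $\bigcup_j X_j^*$. Each failure adds a genuinely new element of $\bigcup_j X_j^*$ to $B$. Since $|\bigcup_j X_j^*|\le kr$, the depth stays at most $kr$. At the end, each $X_i^*$ either was replaced by a fresh $J_i$ or is forced to be kept; in the latter case we output $X_i^*$ itself, obtained in the branch as the solution avoiding nothing new.
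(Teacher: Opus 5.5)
\textbf{Gap: the failure step does not yield progress.} In your final plan, the oracle call for coordinate $i$ forbids all of $B=\bigcup_j K_j$, including the elements $K_i$ already committed to coordinate $i$ itself. On the correct branch $K_i\subseteq X_i^\star$, so $X_i^\star$ meets ${\sf Forb}$ automatically, and a failed call tells you nothing new. The intersection $X_i^\star\cap{\sf Forb}$ may lie entirely inside $K_i\subseteq B$, so your invariant that each failure adds a genuinely new element of $\bigcup_j X_j^\star$ to $B$ is false.

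A concrete instance: take $r=2$, $k_1=k_2=k\ge 2$, a large universe, $\mathcal F$ equal to all $k$-sets containing a fixed element $u$, and \msd with $b=2(k-1)$. Round one returns some $J_1\ni u$, and the call for index $2$ fails. For a witness with $X_2^\star\cap J_1=\{u\}$, the correct branch puts $u$ into $B$. From then on, the call for index $1$ has ${\sf Forb}\supseteq\{u\}$ and fails in every round, while $X_1^\star\cap{\sf Forb}=\{u\}$ offers nothing new. The remaining elements of $X_1^\star$ never appear in any forbidden set, so they can never be pinned.

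In fact, every fully successful round of your greedy produces pairwise disjoint sets, and here no two feasible sets are disjoint, so no round ever succeeds. Your fallback ``output $X_i^\star$ itself'' is not available, since the algorithm does not know $X_i^\star$. Your potential has a related flaw: one element may have to be committed to several coordinates ($u$ to both here), so counting distinct pinned elements of $\bigcup_j X_j^\star$ is the wrong measure.

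\textbf{The missing idea, which is what the paper does.} Exempt a coordinate's own commitments from its forbidden set, and be permissive about containment. Call the oracle with $s=k_i$ and ${\sf Forb}_i=\bigl(\bigcup_j S_j\cup\bigcup_{t<i}X_t\bigr)\setminus S_i$, without requiring $S_i\subseteq X_i$.
\begin{itemize}
\item If all $r$ calls succeed, the decomposition $B_i=S_i$, $C_i=X_i^\star\setminus S_i$, $K_i=X_i\cap S_i$, $L_i=X_i\setminus S_i$ witnesses $\mathbb{T}^\star\Dle(X_1,\ldots,X_r)$.
\item If the call for $i$ fails, then $X_i^\star\cap{\sf Forb}_i\neq\emptyset$, and this intersection lies in $X_i^\star\setminus S_i$. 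Adding the guessed element to $S_i$ therefore strictly decreases the per-coordinate measure $\sum_i(k_i-|S_i|)\le kr$.
\item A base case feasibility-checks $(S_1,\ldots,S_r)$ once all budgets are filled.
\end{itemize}

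This also repairs your counting. The failing index is determined by the greedy order, so there is no extra factor $r$ from branching on it; such a factor would in any case contradict your $(2kr)^{kr}$ leaf bound. The $r$ oracle calls should be charged to each of the at most $(2kr)^{kr}$ nodes with positive measure, not only to leaves.
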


\noindent
This improves the parameter dependence obtained via Kumabe's framework~\cite{kumabe2025maxdistancesparsificationdiversificationclustering},
which yields a doubly-exponential bound on the number of oracle calls, namely
$2^{2^{\Oh\!\left(k\log(k\cdot r)\right)}}$. 

We note that the requirement that each solution has size exactly $k_i$ can be relaxed to
the condition $|X_i|\le k$.
This can be handled by guessing the sizes of an optimal tuple
$(X_1^\star,\ldots,X_r^\star)$ and running the algorithm for all choices
$k_1,\ldots,k_r\le k$.
The number of such choices is at most $k^{r}$.

\smallskip
\noindent{\bf Proof idea (high level).}
We solve \DP via the more general \constraindiverse formulation.
An instance of \constraindiverse is a tuple $\I=((I_1,\ldots,I_r),$ $b)$, where for each $j\in[r]$, $I_j=(I,S_j,k_j)$. 
Here $I$ is an instance of the underlying implicit set system $\Phi$ (with universe $U_I$ and feasible family $\mathcal{F}_I$),
$k_j$ is the prescribed size of the $j$-th solution, and $S_j\subseteq U_I$ is a set of elements that are
\emph{committed} to be included in the $j$-th solution (i.e., we seek $X_j\in\mathcal{F}_I$ with $|X_j|=k_j$ and $S_j\subseteq X_j$).
The task is to either output a tuple $(X_1,\ldots,X_r)$ meeting these constraints and satisfying
$\mathbb{D}(X_1,\ldots,X_r)\ge b$, or correctly report that no such tuple exists.
The original problem \DP corresponds to the special case $S_j=\emptyset$ for all $j\in[r]$.

The algorithm is recursive and is driven by a simple measure $\mu(\I)\;:=\;\sum_{j\in[r]} (k_j-|S_j|)$, which counts how many elements still remain to be chosen in total.

In each recursive call, we run a \emph{success--failure--progress} procedure.
It tries to greedily construct a tuple $(X_1,\ldots,X_r)$ by calling the oracle \phisubfor sequentially for
$i=1,\ldots,r$, each time forbidding all elements already committed elsewhere (and previously constructed solutions),
so as to enforce disjointness outside the current $S_i$.
If this greedy phase succeeds for all $r$ indices, then the resulting tuple is pairwise disjoint (outside the overlaps
already forced by the $S_j$'s), and by the definition of consistent diversity (\Cref{div_defn}) it achieves diversity at least $b$ whenever any feasible
tuple extending $(S_1,\ldots,S_r)$ achieves diversity at least $b$ (``greedy success'').

If the greedy phase fails at some index $j$, then the corresponding forbidden set ${\sf Forb}$
is a \emph{hitting set}: every feasible size-$k_j$ solution must contain an element of ${\sf Forb}$.
We therefore branch on the choice of an element $q\in{\sf Forb}$ and recurse after adding $q$ to $S_j$.
This increases $|S_j|$ by one and hence strictly decreases the measure $\mu$, guaranteeing progress.
Since initially $\mu\le \sum_{j}k_j\le kr$, the recursion depth is at most $kr$.
Moreover, the constructed hitting set satisfies $|{\sf Forb}|\le 2kr$, so the branching factor is at most $2kr$.
Each node performs at most $r$ oracle calls, yielding at most $(2kr)^{kr}\cdot r$ oracle calls overall. Finally, in each oracle call we ask for a set of size $s\le k$ that avoids a forbidden set of size at most $2kr$,
so the oracle parameter satisfies $s+|{\sf Forb}|\le k+2kr$.
Thus, if \phisubfor is \fpt parameterized by $s+|{\sf Forb}|$, then the overall algorithm is \fpt parameterized by $k+r$.

\subsection{Comparison with the Framework of Kumabe}
\label{sec:techAndcomareKumabe}

\noindent{\bf At a glance.} Kumabe sparsifies the solution domain by preserving max-distance information, whereas we
work directly with overlaps and obtain progress via guided branching on small hitting sets. 
Our approach differs conceptually from Kumabe~\cite{kumabe2025maxdistancesparsificationdiversificationclustering}
in both the \emph{object of focus} and the \emph{technical tools} used to obtain \fpt algorithms.

\smallskip
\noindent{\bf Kumabe: sparsification of the solution domain.}
Kumabe studies diversification by constructing an \emph{$r$-max-distance sparsifier}:
a carefully chosen subfamily $K\subseteq \mathcal{D}$ of the (typically exponential) domain $\mathcal{D}$ of feasible
solutions, with the guarantee that for the purpose of finding a solution that is sufficiently far from any fixed
collection of at most $r$ reference solutions, it suffices to search within $K$.
Once such a sparsifier is computed, the diversification task reduces to brute-force enumeration over $K$.

Technically, the sparsifier construction is driven by combinatorial structure in large set families,
most notably via sunflower-type arguments that justify discarding (or, equivalently, never adding) redundant solutions
while preserving the relevant distance information.
Since $\mathcal{D}$ is too large to prune explicitly, Kumabe adopts an incremental viewpoint:
start with $K=\emptyset$ and repeatedly add a witness solution from $\mathcal{D}$ until the defining property of an
$r$-max-distance sparsifier holds.
Each augmentation step is implemented via an \fpt procedure that makes oracle calls to an \emph{exact empty-extension}
type subroutine in order to (i) certify that the current $K$ already suffices, or (ii) find a violating witness solution
that must be added.
The resulting framework applies most naturally to distance-based diversity objectives such as \mmd and \msd.

\smallskip
\noindent{\bf Our framework: reasoning about overlaps and guided branching.}
In contrast, our framework does not attempt to reduce the domain of feasible solutions.
Instead, it works directly with the structure of solutions and their \emph{overlaps}.
The guiding principle is that pairwise disjoint solutions are maximally diverse for any diversity measure that rewards
disjointness; in particular, for consistently diverse functions, a tuple that is disjoint ``as much as possible''
is already optimal in the relevant monotone sense.
Algorithmically, we maintain partial commitments $(S_1,\ldots,S_r)$ and try to complete them greedily into a tuple
$(X_1,\ldots,X_r)$ using the oracle \phisubfor so as to enforce disjointness outside the already committed elements.
If this greedy phase succeeds, consistent diversity implies that we meet the threshold whenever any feasible extension
can meet it.
If it fails for some index $j$, the failure is informative: it produces a small set $Q$ that intersects \emph{every}
feasible solution for the $j$-th component, yielding a branching step.
We then recurse by guessing $q\in Q$ and adding it to $S_j$, strictly decreasing the measure
$\mu=\sum_{j\in[r]}(k_j-|S_j|)$, and thus guaranteeing progress.

\smallskip
\noindent{\bf Consequences.}
Because Kumabe’s framework proceeds by (implicitly) constructing a bounded-size sparsifier and then enumerating over it,
the resulting parameter dependence is typically doubly exponential. For example, Kumabe's framework yields a
doubly-exponential bound on the number of oracle calls, namely
$2^{2^{\Oh\!\left(k\log(k^{2}r)\right)}}$.
In contrast, our approach avoids domain sparsification and instead uses a direct progress measure and bounded branching,
yielding a single-exponential bound of
$2^{\Oh\!\left(kr\log(kr)\right)}$
on the number of oracle calls (equivalently, $(2kr)^{kr}$ up to polynomial factors).
Moreover, since our reasoning is not tied to preserving explicit pairwise distances, it applies to a broader class of
diversity measures captured by consistent diversity (including, e.g., \mmd, \msd, and coverage-style objectives),
while also yielding a conceptually simpler algorithmic template.

\subsection{Applications of \Cref{thm:maxDisjoint}}

\begin{table}[t]
\caption{Representative applications of our meta-theorem. For each problem and diversity objective, we list the previously best known \textsc{fpt} running times. \textbf{Our meta-theorem yields a uniform running time of $2^{\Oh(rk\log(rk))}$ for all problems and objectives}, suppressing polynomial factors in the input size. Here, $k$ denotes the solution size and $r$ the number of requested solutions.}
\label{tab:diversity-comparison}
\centering
\footnotesize
\setlength{\tabcolsep}{4pt}
\renewcommand{\arraystretch}{1.05}

\begin{tabular}{p{4.2cm} c c c}
\toprule
\textbf{Problem} & \textbf{Max-Sum} & \textbf{Max-Min} & \textbf{Max-Coverage} \\
\midrule
\textsc{Diverse Feedback Vertex Set} 
& $2^{\Oh(kr)}$ \cite{Baste2022} 
& $2^{\Oh(kr\log(kr))}$ \cite{Baste2022} 
& -- \\

\textsc{Diverse Odd Cycle Transversal} 
& $2^{2^{\Oh(k\log(k^2 r))}}$ \cite{DBLP:journals/orl/ReedSV04, DBLP:journals/talg/LokshtanovNRRS14, kumabe2025maxdistancesparsificationdiversificationclustering} 
& $2^{2^{\Oh(k\log(k^2 r))}}$ \cite{DBLP:journals/orl/ReedSV04, DBLP:journals/talg/LokshtanovNRRS14, kumabe2025maxdistancesparsificationdiversificationclustering} 
& -- \\

\textsc{Diverse Directed Feedback Vertex Set} 
& $2^{2^{\Oh(k\log(k^2 r))}}$ \cite{DBLP:journals/jacm/ChenLLOR08, DBLP:journals/jctb/LokshtanovRS26, kumabe2025maxdistancesparsificationdiversificationclustering} 
& $2^{2^{\Oh(k\log(k^2 r))}}$ \cite{DBLP:journals/jacm/ChenLLOR08, DBLP:journals/jctb/LokshtanovRS26, kumabe2025maxdistancesparsificationdiversificationclustering} 
& -- \\

\textsc{Diverse Disjoint Rectangle Cover} 
& $2^{2^{\Oh(k\log(k^2 r))}}$ \cite{DBLP:journals/iandc/HeggernesKLRS13, kumabe2025maxdistancesparsificationdiversificationclustering} 
& $2^{2^{\Oh(k\log(k^2 r))}}$ \cite{DBLP:journals/iandc/HeggernesKLRS13, kumabe2025maxdistancesparsificationdiversificationclustering} 
& -- \\

\textsc{Diverse Independent Set on Apex-Minor-Free Graphs} 
& $2^{2^{\Oh(k\log(k^2 r))}}$ \cite{DBLP:journals/jacm/DemaineFHT05, kumabe2025maxdistancesparsificationdiversificationclustering} 
& $2^{2^{\Oh(k\log(k^2 r))}}$ \cite{DBLP:journals/jacm/DemaineFHT05, kumabe2025maxdistancesparsificationdiversificationclustering} 
& -- \\

\textsc{Diverse Steiner Tree} 
& $2^{2^{\Oh(k\log(k^2 r))}}$ \cite{DBLP:conf/stoc/BjorklundHKK07, DBLP:journals/algorithmica/Nederlof13, kumabe2025maxdistancesparsificationdiversificationclustering} 
& $2^{2^{\Oh(k\log(k^2 r))}}$ \cite{DBLP:conf/stoc/BjorklundHKK07, DBLP:journals/algorithmica/Nederlof13, kumabe2025maxdistancesparsificationdiversificationclustering} 
& -- \\
\bottomrule
\end{tabular}
\end{table}

Since our framework employs the same type of oracle as~\cite{kumabe2025maxdistancesparsificationdiversificationclustering},
it also yields \fpt algorithms for all problems covered by their framework.
However, their approach incurs a doubly-exponential number of oracle calls in the parameters $k$ and $r$, and then
recovers explicit solutions via a brute-force search over the resulting sparsifier.
In contrast, our algorithm makes only $(2kr)^{kr}\cdot r \;=\; 2^{\Oh(kr\log(kr))}$ oracle calls and directly constructs
and outputs the desired tuple of solutions.
Consequently, for every problem that admits an \fpt oracle-based algorithm via Theorem~1.1 of
\cite{kumabe2025maxdistancesparsificationdiversificationclustering}, our framework yields a strictly smaller bound on
the number of oracle calls, while also avoiding any additional enumeration step to recover solutions.

Moreover, for several concrete problems where specialized \fpt algorithms are known---for example,
\textsc{Diverse $k$-Path}, \textsc{Diverse Matching}, and \textsc{Diverse Interval Scheduling} via color-coding~\cite{HanakaKKO21}---our
framework matches the same single-exponential dependence on the combined parameter $k+r$ in the exponent.
Likewise, for algebraic settings such as \textsc{Diverse Bases} and \textsc{Diverse Perfect Matchings},
our approach is compatible with the oracle implementations underlying determinantal sieving~\cite{eiben2025determinantal},
and yields comparable parameter dependence when instantiated with those oracles.

\section{Algorithm for \DP}\label{sec_algo}
In this section, we prove Theorem~\ref{thm:maxDisjoint} by designing an algorithm for \DP. This algorithm uses two subroutines related to \DP.

\begin{description}
  \item[Feasibility check.]
  Let \algofeasible be a procedure that verifies whether a candidate set is a feasible
  solution of the desired size. Given $(I,X,k)$, it checks whether $X\in \mathcal{F}_I$
  and $|X|=k$. It returns \yes\ if both conditions hold, and \no\ otherwise.
  We write $\algofeasible(I,X,k)$ for this call.

  \item[Oracle for \phisubfor.]
  Let $\algophiforb$ denote an algorithm for \phisubfor, which we use as an oracle.
  Given $(I,{\sf Forb},s)$, it outputs a set $S\in \mathcal{F}_I$ such that
  $|S|=s$ and $S\cap {\sf Forb}=\emptyset$. If no such set exists, it returns $\bot$.
  We write $\algophiforb(I,{\sf Forb},s)$ for this call. Throughout the algorithm, we maintain that $|{\sf Forb}| \le 2rk$.
\end{description}

Our algorithm is recursive and, in a sense, constructs the $r$ solutions in parallel so as
to maximize their ``diversity'' according to the diversity function $\divf$.
To facilitate the different steps of the algorithm, we work with a more general problem,
called \constraindiverse.
An instance of \constraindiverse consists of a tuple $(I_1,I_2,\ldots,I_r)$ and a diversity
threshold $b$, where for each $j\in[r]$, $I_j = (I, S_j, k_j)$. 
Here:
\begin{itemize}
\item $I$ is an instance of $\Phi$. 
  \item $S_j$ is a set of elements that are required (or suggested) to be included in the solution, and
  \item $k_j$ is a budget parameter.
\end{itemize}

Observe that an instance $(I,\{k_1,k_2,\ldots,k_r\},b)$ of \DP can be viewed as an instance of
\constraindiverse by setting, for each $j\in[r]$, $I_j=(I,\emptyset,k_j)$.

The output of \constraindiverse is a pair $(\val,(X_1,\ldots,X_r))$, where
$\val=\mathbb{D}(X_1,\ldots,X_r)$.
The algorithm must satisfy the following specification.
\begin{itemize}
  \item If there exists a tuple $(X_1^\star,\ldots,X_r^\star)$ such that
  $\mathbb{D}(X_1^\star,\ldots,X_r^\star)\ge b$ and $X_i^\star \supseteq S_i$ for all $i\in[r]$
  (and each $X_i^\star$ is feasible for the corresponding instance $I_i$),
  then the algorithm outputs $(\val,(X_1,\ldots,X_r))$ with $\val\ge b$.
  \item Otherwise, it outputs $(-\infty,\bot)$.
\end{itemize}

It is immediate that solving \constraindiverse on input $((I_1,\ldots,I_r),b)$ with
$I_j=(I,\emptyset,k_j)$ for all $j\in[r]$ yields a solution to \DP on input
$(I,\{k_1,\ldots,k_r\},b)$.

\begin{remark}[Permissive output]\label{rem:permissive}
Our algorithm is \emph{permissive} (in the sense common in the local-search literature~\cite{DBLP:conf/aaai/GaspersKOSS12,DBLP:journals/algorithmica/MarxS10,DBLP:conf/stoc/MoserS11}):
the returned tuple $(X_1,\ldots,X_r)$ is not required to satisfy $S_i\subseteq X_i$ for every $i$.
However, the guarantee is the following: if there exists a feasible tuple
$(X_1^\star,\ldots,X_r^\star)$ with $S_i\subseteq X_i^\star$ for all $i\in[r]$ and
$\mathbb{D}(X_1^\star,\ldots,X_r^\star)\ge b$, then the algorithm outputs some tuple
$(X_1,\ldots,X_r)$ with $\mathbb{D}(X_1,\ldots,X_r)\ge b$ (or, equivalently, outputs a value
$\val\ge b$).
\end{remark}


\noindent

\medskip

\noindent
{\bf Algorithm.}  We now present an algorithm \mainalgo that solves {\constraindiverse}. The algorithm is based on the following insight. If we can find a tuple $(X_1,\ldots,X_r)$ such that $X_i\cap X_j=\emptyset$ for all $1\le i<j\le r$,
then this is, in a natural sense, the best outcome one can hope for under any
diversity measure that rewards disjointness. In particular, since $\divf$ is
consistently diverse, any such pairwise-disjoint tuple achieves the maximum
possible diversity (among tuples of the prescribed sizes).

Recall that an instance of \constraindiverse is given by $((I_1,\ldots,I_r),b)$, where for each
$j\in[r]$, $I_j=(I,S_j,k_j)$. 
Here $I$ is an instance of $\Phi$, $S_j$ is the required/suggested set, and $k_j$ is the budget. In one recursive step of our algorithm, we run a procedure that, in \fpt time, achieves one of the following outcomes:
\begin{description}
  \item[Greedy step (success).]
  The procedure outputs a pair $(\val,(X_1,\ldots,X_r))$, where
  $\val=\mathbb{D}(X_1,\ldots,X_r)$ and $\val\ge b$, with the following permissive guarantee:
  if there exists a feasible tuple $(X_1^\star,\ldots,X_r^\star)$ extending $(S_1,\ldots,S_r)$
  such that $\mathbb{D}(X_1^\star,\ldots,X_r^\star)\ge b$, then the procedure outputs some
  tuple $(X_1,\ldots,X_r)$ with $\val\ge b$.

  \item[Terminate (failure).]
  The procedure outputs $(-\infty,\bot)$, certifying that no feasible tuple extending
  $(S_1,\ldots,S_r)$ achieves diversity at least $b$.

  \item[Recurse (progress).]
  The procedure returns a set $Q$ with $|Q|\le 2rk$ and an index $j\in[r]$ such that every
  feasible solution to $I_j=(I,S_j,F_j,k_j)$ contains at least one element of $Q$
  (i.e., $Q$ is a hitting set for the feasible-solution family of $I_j$).
\end{description}


The first two outcomes lead to termination. In the third outcome, we make progress by guessing an element
of $Q$ and adding it to $S_j$, thereby increasing $|S_j|$ by one. This guessing is implemented via recursion:
we branch on the choice of $q\in Q$ and create a new instance of \constraindiverse in which only the $j$-th
component is updated by adding $q$ to $S_j$, while all other components are kept unchanged. Formally, for each
$q\in Q$ we recurse on $\I'=((I'_1,\ldots,I'_r),b)$ where
\[
I'_j=(I,\,S_j\cup\{q\},\,k_j)
\qquad\text{and}\qquad
I'_i=I_i \ \text{ for all } i\neq j.
\]

To upper bound the running time, we associate the following measure with an instance
$\I=((I_1,\ldots,I_r),b)$ of \constraindiverse, where $I_j=(I,S_j,k_j)$ for each $j\in[r]$:
\[
~~~~~~~~~~~~~~~~~~~~~~~~~~~~\mu(\I)\;=\; \sum_{j\in[r]} \bigl(k_j - |S_j|\bigr).
\]

Observe that in each recursive step we strictly decrease the measure, i.e.,
$\mu(\mathcal{I}')<\mu(\mathcal{I})$. Hence the depth of the recursion tree is at most $kr$.
Moreover, the branching factor is at most $|Q|\le 2kr$.

\medskip

\noindent 
{\bf Success--Failure--Progress procedure.}
We now describe a procedure that, in \fpt time, returns one of three outcomes:
\emph{success} (outputs a solution tuple of diversity at least $b$),
\emph{failure} (outputs $(-\infty,\bot)$), or
\emph{progress} (returns a small set $Q$ and an index $j$ enabling a branching step).
We process the indices
$i=1,2,\ldots,r$ in order. Let $X_0:=\emptyset$. For each $i\in[r]$ we do the following.

\begin{itemize}
   \item Set
  \[
~~~~~~~~~~~~~~~~~~~~~~~~~~~~{\sf Forb}\;:=\; \left(\Bigl(\bigcup_{j\in[r]} S_j\Bigr)\ \ \cup\ \ \Bigl(\bigcup_{j<i} X_j\Bigr)\right) \setminus  S_i.
\]
  Invoke the oracle $\algophiforb(I,{\sf Forb},k_i)$.
  If it returns a set $X_i$, we continue to the next index.
  Otherwise, we output $Q:={\sf Forb}$ and $j:=i$.
\end{itemize}

First, consider the case in which the \texttt{for}-loop outputs $Q:={\sf Forb}$ and $j:=i$. In this case, the correctness argument is as follows. 
 The oracle call for index $i$ fails precisely when there is
no feasible set of the required size for $I_i$ that avoids ${\sf Forb}$. Equivalently, \emph{every}
feasible solution $X$ of size $k_i$ for $I_i$ (regardless of whether it contains $S_i$) intersects
${\sf Forb}$. Thus ${\sf Forb}$ is a hitting set for the family of feasible size-$k_i$ solutions of
$I_i$, and returning $Q:={\sf Forb}$ with $j:=i$ is correct; branching on an element of $Q$ is therefore safe.

The bound $|Q|\le 2kr$ is immediate. Recall that
  \[~~~~~~~~~~~~~~~~~~~~~~~~~~~~
{\sf Forb}\;:=\; \left(\Bigl(\bigcup_{j\in[r]} S_j\Bigr)\ \ \cup\ \ \Bigl(\bigcup_{j<i} X_j\Bigr)\right) \setminus  S_i.
\]
Since $|S_j|\le k$ for all $j$ and $|X_j|\le k$ for all $j<i$, we have
\[
|{\sf Forb}|
\ \le\
\sum_{j\in[r]\setminus\{i\}} |S_j|
\;+\;
\sum_{j<i} |X_j|
\ \le\
(r-1)k + (i-1)k
\ \le\ 2(r-1) k \ \le\
2kr.
\]

Now assume that the \texttt{for}-loop produces a tuple $(X_1,\ldots,X_r)$. We claim that if
$\val=\divf(X_1,\ldots,X_r)\ge b$, then we can return $(\val,(X_1,\ldots,X_r))$. Otherwise,
this branch is a \no-instance (i.e., no feasible tuple can achieve diversity at least $b$).  The correctness of this step is formalized in Lemma~\ref{lem:greedy-success}. The three outcomes of this procedure will be referred to as \emph{greedy success}, \emph{failure}, and \emph{progress}, respectively.

\medskip
\noindent\textbf{Base case.} 
To complete the recursive algorithm, we now describe the base case. If $\mu(\mathcal{I})\le 0$, then the solutions are completely determined by the sets
$S_j$. In this case, the algorithm invokes the polynomial-time feasibility checker
$\algofeasible$ to verify each candidate set $S_j$. If some $S_j$ is infeasible, the algorithm
returns $(-\infty,\bot)$, as the current branch cannot yield a valid solution tuple.
Otherwise, it computes the diversity value $\mathbb{D}(S_1,\ldots,S_r)$  and returns the tuple 
$(\val=\mathbb{D}(S_1,\ldots,S_r),(S_1,\ldots,S_r))$ if this value is at least $b$; if not, it again
returns $(-\infty,\bot)$. Next we give the correctness proof and the running-time analysis.

\subsection{Correctness and Time Complexity Analysis}

We now establish the correctness of the recursive procedure
\mainalgo. We begin by introducing the basic terminology of \emph{extend} and \emph{avoid}, which we  use throughout the analysis.

\begin{definition}[Extend and Avoid]\label{def:extend-avoid}
Let $(X'_1,\ldots,X'_r)$ and $(S_1,\ldots,S_r)$ be two sequences of $r$ sets. We say that $(X'_1,\ldots,X'_r)$ \emph{extends} $(S_1,\ldots,S_r)$ if
$X'_i \supseteq S_i$ for every $i\in[r]$. For two sets $A$ and $B$, we say that $A$ \emph{avoids} $B$ if
$A\cap B=\emptyset$.
\end{definition}

\noindent
The algorithm maintains the following invariant throughout its execution.
\smallskip

\noindent{\bf \em Correctness invariant.}
Consider any recursive call on an instance $\mathcal{I}=((I_1,\ldots,I_r),b)$ with current
partial choices $(S_1,\ldots,S_r)$.
If there exists a feasible tuple $(X_1^\star,\ldots,X_r^\star)$ that extends $(S_1,\ldots,S_r)$ and satisfies
$\mathbb{D}(X_1^\star,\ldots,X_r^\star)\ge b$, then the algorithm either
\begin{enumerate}[(\roman*)]
  \item outputs a tuple $(X_1,\ldots,X_r)$ with $\mathbb{D}(X_1,\ldots,X_r)\ge b$, or
  \item returns a branching pair $(Q,j)$ such that there exists some $q\in Q$ for which the
  recursive call obtained by replacing $S_j$ with $S_j\cup\{q\}$ remains consistent with such a witness tuple.
\end{enumerate}


This invariant is formalized by the lemmas stated below, which show that the algorithm
reconstructs a sufficiently diverse tuple of solutions whenever one exists. The following lemma establishes the correctness of the \emph{greedy success} step.




\begin{lemma}[Correctness of greedy success, $\star$]\footnote{The proofs labeled with $\star$ can be found in the appendix.}\label{lem:greedy-success}
Let $((I_1,\ldots,I_r),b)$ be an instance of \constraindiverse, where
$I_j=(I,S_j,k_j)$ for each $j\in[r]$.
Suppose that \mainalgo successfully constructs sets $(X_1,\ldots,X_r)$ in the greedy step,
that is, the greedy construction succeeds for all $r$ indices.
If there exists a tuple $(X_1^\star,\ldots,X_r^\star)$ such that
\begin{itemize}
  \item \textbf{Extension.} $(X_1^\star,\ldots,X_r^\star)$ extends $(S_1,\ldots,S_r)$,
  \item \textbf{Feasibility and cardinality.} $X_j^\star\in \mathcal{F}_I$ and $|X_j^\star|=k_j$
  for all $j\in[r]$, and
  \item \textbf{Diversity.} $\D(X_1^\star,\ldots,X_r^\star)\ge b$,
\end{itemize}
then the tuple $(X_1,\ldots,X_r)$ returned by the greedy phase satisfies
$\D(X_1,\ldots,X_r)\ge b$. 
\end{lemma}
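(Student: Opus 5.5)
The plan is to show that the greedy tuple dominates the witness tuple in the order of \Cref{def:less_diverse}, that is,
\[
(X_1^\star,\ldots,X_r^\star)\ \Dle\ (X_1,\ldots,X_r).
\]
Once this holds, consistent diversity (\Cref{div_defn}) gives $\D(X_1,\ldots,X_r)\ge \D(X_1^\star,\ldots,X_r^\star)\ge b$. So the whole proof reduces to writing down explicit decompositions and checking the three conditions.

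For each $i\in[r]$, I will set $A_i:=X_i^\star$ and $J_i:=X_i$, and decompose them as
\[
B_i:=S_i,\qquad C_i:=X_i^\star\setminus S_i,\qquad K_i:=X_i\cap S_i,\qquad L_i:=X_i\setminus S_i .
\]
These are valid disjoint decompositions. For $A_i$ this uses the extension hypothesis $S_i\subseteq X_i^\star$, so that $A_i=B_i\uplus C_i$. For $J_i$ it is immediate that $J_i=K_i\uplus L_i$.

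Condition 1 is direct. We have $K_i\subseteq S_i=B_i$. For the sizes, the oracle returns a set of size exactly $k_i$, and by hypothesis $|X_i^\star|=k_i$, so $|A_i|=|J_i|$.

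Condition 3 asks that $L_i\cap B_j=\emptyset$ for every $j$.
\begin{itemize}
\item For $j=i$ this holds by the definition $L_i=X_i\setminus S_i$.
\item For $j\neq i$, every element of $S_j\setminus S_i$ lies in the forbidden set ${\sf Forb}$ used when $X_i$ was computed, and $X_i$ avoids ${\sf Forb}$. Hence $X_i\cap S_j\subseteq S_i$, and therefore $L_i\cap S_j=\emptyset$.
\end{itemize}

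Condition 2 asks that $L_i\cap L_j=\emptyset$ for distinct $i,j$. Assume without loss of generality that $j<i$, and suppose some $e$ lies in $L_i\cap L_j$. Then $e\in X_j$ and $e\notin S_i$, so $e$ was in ${\sf Forb}$ at step $i$, since ${\sf Forb}$ contains $\bigcup_{j<i}X_j$ minus $S_i$. But $X_i$ avoids ${\sf Forb}$, which contradicts $e\in X_i$. Hence the greedy tuple indeed dominates the witness tuple, and the lemma follows.

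The step I expect to need the most care is the permissive aspect. The greedy set $X_i$ need not contain $S_i$, so one cannot simply take $K_i=S_i$. Choosing $K_i=X_i\cap S_i$ handles this, because \Cref{def:less_diverse} only requires $K_i\subseteq B_i$ and not equality. The other point to get right is that the forbidden set removes exactly $S_i$, so all overlaps of $X_i$ with earlier sets or with other $S_j$ are confined to $S_i$; both condition 2 and condition 3 rest on this.
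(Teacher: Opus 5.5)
Your proposal is correct and matches the paper's proof: it uses the same decompositions $B_i=S_i$, $C_i=X_i^\star\setminus S_i$, $K_i=X_i\cap S_i$, $L_i=X_i\setminus S_i$, and it checks the three conditions of Definition~\ref{def:less_diverse} through the same forbidden-set arguments. The conclusion then follows from consistent diversity, exactly as in the paper.
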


We now state the main lemma that will be used to prove the correctness of \Cref{thm:maxDisjoint}.

\begin{lemma}[$\star$]\label{lemma:inductionMaxDisjoint}
Let $((I_1,\ldots,I_r),b)$ be an instance of \constraindiverse, where $I_j=(I,S_j,k_j)$ for each
$j\in[r]$. Suppose there exists a tuple of feasible solutions $(X_1^\star,\ldots,X_r^\star)$ such that
\begin{itemize}
  \item \textbf{Extension.} $(X_1^\star,\ldots,X_r^\star)$ extends $(S_1,\ldots,S_r)$,
  \item \textbf{Feasibility and cardinality.} $X_j^\star\in \mathcal{F}_I$ and $|X_j^\star|=k_j$ for all $j\in[r]$, and
  \item \textbf{Diversity.} $\D(X_1^\star,\ldots,X_r^\star)\ge b$.
\end{itemize}
Then \mainalgo returns a tuple $(X_1,\ldots,X_r)$ such that
\begin{itemize}
  \item $\D(X_1,\ldots,X_r)\ge b$, and
  \item $X_i\in\mathcal{F}_I$ and $|X_i|=k_i$ for all $i\in[r]$.
\end{itemize}
\end{lemma}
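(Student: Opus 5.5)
We prove \Cref{lemma:inductionMaxDisjoint} by induction on the measure $\mu(\I)=\sum_{j\in[r]}(k_j-|S_j|)$. We assume the witness tuple $(X_1^\star,\ldots,X_r^\star)$ from the hypothesis and show that \mainalgo returns a feasible tuple of diversity at least $b$.

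\textbf{Base case.} Suppose $\mu(\I)\le 0$. Each witness set satisfies $S_j\subseteq X_j^\star$ and $|X_j^\star|=k_j$, so $|S_j|\le k_j$. Hence every term of $\mu$ is nonnegative, so $\mu(\I)=0$ and $S_j=X_j^\star$ for all $j$. Therefore \algofeasible accepts every $S_j$. The algorithm then computes $\D(S_1,\ldots,S_r)=\D(X_1^\star,\ldots,X_r^\star)\ge b$ and returns this tuple, which is feasible and has the right sizes.

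\textbf{Inductive step.} Suppose $\mu(\I)>0$ and run the success--failure--progress procedure. There are two cases.

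\emph{Greedy success.} If the greedy loop produces all of $X_1,\ldots,X_r$, each $X_i$ was returned by the oracle $\algophiforb(I,{\sf Forb},k_i)$. So $X_i\in\mathcal F_I$ and $|X_i|=k_i$. By \Cref{lem:greedy-success}, the existence of the witness gives $\D(X_1,\ldots,X_r)\ge b$, and the procedure returns this tuple. This is exactly the required conclusion. The permissive output does not matter here, because the lemma only demands feasibility, size and diversity, not $S_i\subseteq X_i$.

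\emph{Progress.} Otherwise the oracle fails at some index $i$, and the procedure returns $Q={\sf Forb}$ and $j=i$. Failure means that every $X\in\mathcal F_I$ with $|X|=k_i$ meets ${\sf Forb}$. In particular $X_i^\star$ meets ${\sf Forb}$, so we can pick $q\in X_i^\star\cap Q$.

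Since ${\sf Forb}$ excludes $S_i$, we have $q\notin S_i$. Then $(X_1^\star,\ldots,X_r^\star)$ still extends the updated commitments in which $S_i$ is replaced by $S_i\cup\{q\}$. The new instance $\I'$ has $\mu(\I')=\mu(\I)-1$. The algorithm branches on every element of $Q$, including $q$, so by the induction hypothesis the branch for $q$ returns a feasible tuple of the prescribed sizes with diversity at least $b$.

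\textbf{Propagating the branch result.} It remains to check that \mainalgo propagates a successful branch upward. Say it returns any branch answer with $\val\ge b$. Every returned tuple is certified: leaves check $\val\ge b$ and feasibility with \algofeasible, and greedy tuples come from oracle outputs with sizes $k_i$. So whichever successful branch is returned has the required properties.

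\textbf{Main obstacle.} The delicate step is the progress case. We must make sure that the chosen $q$ actually enlarges $S_i$, i.e., $q\notin S_i$, so that the measure strictly decreases. This is why ${\sf Forb}$ is defined with $S_i$ removed. We must also make sure that the hitting-set property applies to $X_i^\star$ itself. This holds because the oracle ranges over all size-$k_i$ feasible sets, not only those containing $S_i$.
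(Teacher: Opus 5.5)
Your proposal is correct and follows the paper's proof essentially step by step: induction on $\mu$, the same base case (the witness forces $S_j=X_j^\star$), \Cref{lem:greedy-success} for greedy success, and in the progress case branching on an element of $X_i^\star\cap{\sf Forb}$, which lies outside $S_i$ because ${\sf Forb}$ excludes $S_i$. Your explicit checks that every returned tuple is feasible with the prescribed sizes, and that a successful branch is passed upward, spell out what the paper leaves implicit.
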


Finally, we combine Lemmas~\ref{lem:greedy-success} and~\ref{lemma:inductionMaxDisjoint} to prove Theorem~\ref{thm:maxDisjoint}.

\begin{proof}[Proof of Theorem~\ref{thm:maxDisjoint}]
We invoke \mainalgo on the instance $((I_1,\ldots,I_r),b)$, where
$I_j=(I,\emptyset,k_j)$ for all $j\in[r]$.
If the algorithm returns a tuple with diversity at least $b$, we output \yes;
otherwise, we output \no.
Correctness follows from \Cref{lemma:inductionMaxDisjoint}.

\medskip
\noindent
{\bf Running-time analysis.}
Initially,
\[
\mu(\mathcal{I})=\sum_{j\in[r]} (k_j-|\emptyset|)=\sum_{j\in[r]} k_j \le rk.
\]
Each progress step increases $|S_j|$ by one for some $j$, and therefore decreases the measure by one.
Hence the recursion depth is at most $rk$. In any progress step, the procedure returns a set
$Q$ with $|Q|\le 2rk$, and we branch on each element of $Q$, so the branching factor is at most $2rk$.
Moreover, each recursive call performs at most $r$ oracle invocations to $\algophiforb$ (one per index in the greedy phase).
Therefore, the total number of oracle calls is at most $r\cdot (2rk)^{rk}$, and the overall running time is fixed-parameter tractable.
\end{proof}

\begin{remark}[Improved branching]\label{rem:improved-branching}
With a more careful implementation/analysis of the greedy phase, one can ensure that the returned
hitting set $Q$ satisfies $|Q|\le rk$ (instead of $2rk$). Consequently, the branching factor becomes
at most $rk$, and the bound on the total number of oracle calls to $\algophiforb$ improves to $r\cdot (rk)^{rk}$.
\end{remark}

\section{Applications}
\label{prelims}
In this section, we show how to derive concrete running-time bounds for diverse variants of several combinatorial problems.
To this end, we first demonstrate that our notion of a consistently diverse function captures the standard diversity objectives studied in the literature.
We then explain how to implement the oracle subproblem \phisubfor for a broad class of natural combinatorial problems, yielding explicit \fpt running times via our meta-theorem.

\subsection{Diversity Functions}
We begin by observing that several standard diversity objectives—such as \mmd, \msd, and \mcd—are captured as special
cases of our generalized notion (Definition~\ref{div_defn}). In particular, each of these objectives is
\emph{consistently diverse}.

\begin{lemma}[$\star$]
\label{lemma:generalise diversity}
The diversity measures \msd, \mmd, and \mcd are consistently diverse and polynomially computable.
\end{lemma}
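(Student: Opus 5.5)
The plan is to reduce all three objectives to two structural facts about a pair $\mathbb{T}_1=(A_1,\ldots,A_r)\Dle\mathbb{T}_2=(J_1,\ldots,J_r)$ with decompositions $A_i=B_i\uplus C_i$ and $J_i=K_i\uplus L_i$ as in Definition~\ref{def:less_diverse}. The first fact controls pairwise intersections and will handle \msd\ and \mmd. The second controls the union and will handle \mcd.

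\textbf{Step 1 (pairwise intersections shrink).} For distinct $i,j$, expand
\[
J_i\cap J_j=(K_i\cap K_j)\cup(K_i\cap L_j)\cup(L_i\cap K_j)\cup(L_i\cap L_j).
\]
The last term is empty by condition~2. For the middle terms, $K_i\subseteq B_i$ (condition~1) and $L_j\cap B_i=\emptyset$ (condition~3) give $K_i\cap L_j=\emptyset$, and symmetrically $L_i\cap K_j=\emptyset$. Hence
\[
J_i\cap J_j=K_i\cap K_j\subseteq B_i\cap B_j\subseteq A_i\cap A_j .
\]
Since $|J_i|=|A_i|$ for every $i$, we get
\[
d_H(J_i,J_j)=|J_i|+|J_j|-2|J_i\cap J_j|\ \ge\ |A_i|+|A_j|-2|A_i\cap A_j|=d_H(A_i,A_j).
\]
Thus every pairwise Hamming distance weakly increases. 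Summing over pairs gives $d_{\mathrm{sum}}(\mathbb{T}_1)\le d_{\mathrm{sum}}(\mathbb{T}_2)$, and taking minima gives $d_{\mathrm{min}}(\mathbb{T}_1)\le d_{\mathrm{min}}(\mathbb{T}_2)$.

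\textbf{Step 2 (coverage).} This is the only step needing a little care, because inclusion--exclusion is awkward here; I would count directly instead. The sets $L_1,\ldots,L_r$ are pairwise disjoint. Each $L_i$ is disjoint from $\bigcup_j K_j$, because $K_j\subseteq B_j$ and $L_i\cap B_j=\emptyset$. Therefore
\[
\Bigl|\bigcup_i J_i\Bigr|=\Bigl|\bigcup_i K_i\Bigr|+\sum_i|L_i|,
\qquad\text{where}\qquad
\sum_i |L_i|=\sum_i(|A_i|-|K_i|),
\]
using $|J_i|=|A_i|$. On the other hand, since $A_i=K_i\cup(A_i\setminus K_i)$, a union bound gives
\[
\Bigl|\bigcup_i A_i\Bigr|\le\Bigl|\bigcup_i K_i\Bigr|+\sum_i|A_i\setminus K_i|=\Bigl|\bigcup_i K_i\Bigr|+\sum_i(|A_i|-|K_i|).
\]
Comparing the two expressions yields $d_{\mathrm{cov}}(\mathbb{T}_1)\le d_{\mathrm{cov}}(\mathbb{T}_2)$.

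\textbf{Step 3 (polynomial computability).} Each objective is computable in time polynomial in $|U|$ for fixed-size input. The sum and minimum need $\binom{r}{2}$ symmetric-difference computations, each in $\Oh(|U|)$ time. The coverage needs one union of $r$ sets, in $\Oh(r|U|)$ time.

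I expect no real obstacle. The only subtle point is in Step~1: observing that the cross terms $K_i\cap L_j$ vanish, which uses the inclusion $K_i\subseteq B_i$ together with condition~3, and not condition~2 alone.
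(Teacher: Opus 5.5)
Your proof is correct and follows the paper's argument essentially step for step. For Max-Sum and Max-Min you use the same four-term expansion showing $J_i\cap J_j=K_i\cap K_j\subseteq A_i\cap A_j$, and for Max-Coverage you compare the same exact count $|\bigcup_i J_i|=|\bigcup_i K_i|+\sum_i(|A_i|-|K_i|)$ against the union bound for $|\bigcup_i A_i|$. Restricting Step~1 to distinct $i,j$ is slightly cleaner than the paper's claim for all $i,j$, since condition~2 applies only to distinct indices and the case $i=j$ is trivial.
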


\subsection{Algorithms for \phisubfor}
For ease of presentation, we restate the oracle problem.

\defparprob{\phisubfor}
{An instance $I$, an integer $s\in\mathbb{N}$, and a forbidden set ${\sf Forb}\subseteq U_I$.}
{$s+|{\sf Forb}|$}
{Does there exist a set $S\in \mathcal{F}_I$ such that $|S|=s$ and $S\cap {\sf Forb}=\emptyset$? If yes, output such a set $S$.}

When ${\sf Forb}=\emptyset$, we denote the resulting problem by \phisub.
Note that \phisub is simply the underlying (``non-diverse'') feasibility problem of finding a feasible set of size $s$.

We will also use a weighted variant, denoted \phisubw, in which each element of $U_I$ has a nonnegative weight and the goal is to find
a feasible set of weight $s$ (or of weight at most $s$, depending on the base problem).
Then \phisubfor reduces to \phisubw as follows: assign weight $s+1$ to every element in ${\sf Forb}$ and weight $1$ to every element in $U_I\setminus{\sf Forb}$.
Any minimum-weight feasible solution of size $s$ will avoid ${\sf Forb}$ whenever such a solution exists, and otherwise no feasible solution of size $s$ that avoids ${\sf Forb}$ exists. Thus, for several problems---including \textsc{Odd Cycle Transversal}, \textsc{Directed Feedback Vertex Set}, and \textsc{Multicut}---this yields the required algorithms for \phisubfor (via their corresponding weighted formulations), and hence \fpt algorithms for the diverse variants through our meta-theorem~\cite{DBLP:journals/talg/KimKPW24,
DBLP:journals/csr/KratschPSW26, DBLP:journals/jacm/KimKPW25,
DBLP:journals/siamcomp/KimKPW25}.

However, for many cases---especially cut problems and parity-type problems such as
\textsc{Odd Cycle Transversal}, \textsc{Directed Feedback Vertex Set}, and \textsc{Multicut}---we can often avoid
weights altogether. Instead, one can reduce \phisubfor to the corresponding \emph{unweighted} base problem by applying
standard graph-modification operations (e.g.\ torso-style reductions, gadget substitutions, or equivalent transformations)
that enforce the forbidden set by construction. This yields a direct implementation of the oracle in the unweighted
setting, while preserving the relevant parameter dependence~\cite{DBLP:journals/talg/LokshtanovPSSZ20,DBLP:journals/siamcomp/MarxR14}.

For {\sc Set-Cover} type problems (including \textsc{Hitting Set}) which is known to be \fpt parameterised by solution size , implementing the forbidden set is immediate:
we simply delete all forbidden elements from the ground set (and update the family accordingly) and then run the
underlying algorithm on the resulting instance.

Thus, in most implementations, we do not need to treat $|{\sf Forb}|$ as an explicit parameter: the forbidden set can be enforced
by a simple preprocessing step (deleting forbidden elements, or modifying the instance so that forbidden elements are unusable)
and then invoking an algorithm for the underlying problem parameterized only by $s$ (or $k$).
This is convenient in practice and can potentially be exploited to obtain stronger running-time bounds.

\section{Conclusion}\label{conc}
We conclude by emphasizing the generality and strength of our results.
Our framework generalizes and strengthens existing approaches to computing diverse solutions by identifying the
\emph{exact empty-extension oracle} as a unifying algorithmic primitive.
In particular, we show that whenever a problem admits such an oracle, a broad class of its diversity variants is
fixed-parameter tractable with respect to the combined parameter $k+r$.
Our results apply to a generalized diversity measure that subsumes \msd, \mmd, and \mcd, thereby capturing several
prominent objectives studied in the literature under a single umbrella.
At the same time, our techniques do not imply fixed-parameter tractability for \emph{Venn diversity};
establishing whether Venn diversity admits an \fpt algorithm in this generality remains an interesting open problem.

As a concrete open problem, it would be interesting to design a $2^{\Oh(kr)}$-time algorithm for
\textsc{Diverse Odd Cycle Transversal} parameterized by $k+r$.
Our current framework yields a running time of $2^{\Oh(kr\log(kr))}$ (up to polynomial factors in the input size).
Another natural direction is to investigate kernelization: in particular, establishing whether
\textsc{Diverse Odd Cycle Transversal} admits a polynomial kernel (parameterized by $k+r$) would be very interesting.



\bibliography{ref}
\section{Appendix}

\subsection{Proof of Lemma~\ref{lem:greedy-success}}

\begin{proof}
Let $(X^\star_1,\ldots,X^\star_r)$ be a tuple satisfying the assumptions.
Since it extends $(S_1,\ldots,S_r)$, for each $i\in[r]$ we can write
$X_i^\star = S_i \uplus R_i^\star$ for some $R_i^\star$.

For the tuple $(X_1,\ldots,X_r)$ produced by the greedy phase, define $\hat S_i := X_i\cap S_i$ and
$R_i := X_i\setminus \hat S_i$. Then $X_i=\hat S_i \uplus R_i$ for all $i\in[r]$.

We now show that $(X_1^\star,\ldots,X_r^\star)\Dle (X_1,\ldots,X_r)$ by
Definition~\ref{def:less_diverse}. Consider the decompositions
\[
A_i:=X_i^\star,\quad B_i:=S_i,\quad C_i:=R_i^\star,
\qquad\text{and}\qquad
J_i:=X_i,\quad K_i:=\hat S_i,\quad L_i:=R_i.
\]
We verify the required conditions:
\begin{enumerate}[(\roman*)]
  \item For each $i\in[r]$, we have $K_i=\hat S_i\subseteq S_i=B_i$. Moreover, $|X_i|=k_i$ by construction
  and $|X_i^\star|=k_i$ by assumption, hence $|A_i|=|J_i|$.

  \item For distinct $i,j\in[r]$, we have $L_i\cap L_j = R_i\cap R_j=\emptyset$.
  This follows from the next claim.
  \begin{claim}
      For every $i\in[r]$ and every $t<i$, we have
  \[
  R_i \cap X_t = \emptyset .
  \]
  \end{claim}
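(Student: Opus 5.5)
The claim follows directly from how the forbidden set is built in the greedy phase. I will unwind the definitions of $R_i$ and ${\sf Forb}$ and argue element-wise.

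Fix $i\in[r]$ and $t<i$. When the greedy phase processes index $i$, it calls $\algophiforb(I,{\sf Forb},k_i)$ with
\[
{\sf Forb}=\Bigl(\bigcup_{j\in[r]} S_j\cup\bigcup_{j<i}X_j\Bigr)\setminus S_i .
\]
The returned set $X_i$ satisfies $X_i\cap{\sf Forb}=\emptyset$. Since $t<i$, we have $X_t\setminus S_i\subseteq{\sf Forb}$, and hence $X_i\cap (X_t\setminus S_i)=\emptyset$. So every element of $X_i$ that lies in $X_t$ must lie in $S_i$.

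Next, recall that $R_i=X_i\setminus\hat S_i=X_i\setminus (X_i\cap S_i)=X_i\setminus S_i$. Therefore
\[
R_i\cap X_t=(X_i\setminus S_i)\cap X_t=X_i\cap(X_t\setminus S_i)=\emptyset,
\]
which proves the claim.

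To obtain condition (ii) from the claim, take distinct $i,j$ and assume without loss of generality that $j<i$. Then $R_i\cap R_j\subseteq R_i\cap X_j=\emptyset$.

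Condition (iii) of Definition~\ref{def:less_diverse} follows by the same reasoning. We need $R_i\cap S_j=\emptyset$ for all $j$:
\begin{itemize}
\item For $j\ne i$, $S_j\setminus S_i\subseteq{\sf Forb}$, and $R_i\subseteq X_i\setminus S_i$ avoids ${\sf Forb}$, so $R_i\cap S_j=\emptyset$.
\item For $j=i$, $R_i\cap S_i=\emptyset$ by the definition of $R_i$.
\end{itemize}

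With conditions (i)--(iii) in hand, we have $(X_1^\star,\ldots,X_r^\star)\Dle(X_1,\ldots,X_r)$. Consistent diversity (Definition~\ref{div_defn}) then gives $\D(X_1,\ldots,X_r)\ge\D(X_1^\star,\ldots,X_r^\star)\ge b$, completing Lemma~\ref{lem:greedy-success}.

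The only subtle point is the identity $R_i=X_i\setminus S_i$, which is what makes the removal of $S_i$ from ${\sf Forb}$ harmless. Everything else is routine bookkeeping.
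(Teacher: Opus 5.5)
Your proof is correct and follows the paper's argument. The paper likewise combines $\bigl(\bigcup_{t<i}X_t\bigr)\setminus S_i\subseteq{\sf Forb}$ (for the oracle call at index $i$) with $R_i\cap S_i=\emptyset$, which you make explicit through the identity $R_i=X_i\setminus S_i$, and your derivations of conditions (ii) and (iii) match the paper's proof of Lemma~\ref{lem:greedy-success}.
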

  \begin{claimproof}
      When constructing $X_i$, the oracle is invoked with a forbidden set containing
  $\bigl(\bigcup_{t<i} X_t\bigr)\setminus S_i$. Since $R_i\cap S_i=\emptyset$ by definition of $R_i$,
  any element chosen into $R_i$ lies outside $S_i$ and therefore cannot lie in any earlier $X_t$.
  \end{claimproof}

  Now fix $i\neq j$ and assume w.l.o.g.\ that $j<i$. Since $R_j\subseteq X_j$, the claim implies
  $R_i\cap R_j\subseteq R_i\cap X_j=\emptyset$.

  \item For all $i,j\in[r]$, we have $L_i\cap B_j = R_i\cap S_j=\emptyset$. For $j=i$ this holds by the
  definition of $R_i$. For $j\neq i$, the forbidden set used in the oracle call for index $i$ contains
  $\bigcup_{j\neq i} S_j$, hence no element of $S_j$ can be selected into $R_i$.
\end{enumerate}
Therefore, $(X_1^\star,\ldots,X_r^\star)\Dle (X_1,\ldots,X_r)$.

Finally, since $\D$ is consistently diverse (Definition~\ref{div_defn}), we obtain
\[
\D(X_1,\ldots,X_r)\ \ge\ \D(X_1^\star,\ldots,X_r^\star)\ \ge\ b,
\]
which completes the proof.
\end{proof}

\subsection{Proof of Lemma~\ref{lemma:inductionMaxDisjoint}}

\begin{proof}
We prove the lemma by induction on the measure
\[
\mu(\mathcal{I}) \;:=\; \sum_{j\in[r]} \bigl(k_j-|S_j|\bigr),
\]
where $\mathcal{I}=((I_1,\ldots,I_r),b)$ and each $I_j=(I,S_j,k_j)$.

\smallskip
\noindent\textbf{Base case.}
Assume $\mu(\mathcal{I})\le 0$. Since the lemma assumes the existence of a witness tuple
$(X_1^\star,\ldots,X_r^\star)$ that extends $(S_1,\ldots,S_r)$ with $|X_j^\star|=k_j$ for all $j$,
we have $|S_j|\le k_j$ for every $j\in[r]$. Hence each summand $k_j-|S_j|\ge 0$, and
$\mu(\mathcal{I})\le 0$ implies $k_j-|S_j|=0$ for all $j$, i.e., $|S_j|=k_j$ for all $j$.
Because $X_j^\star$ extends $S_j$ and has size $k_j$, it follows that $X_j^\star=S_j$ for all $j$.
Therefore, returning $(S_1,\ldots,S_r)$ yields diversity at least $b$, and the claim holds.

\smallskip
\noindent\textbf{Induction hypothesis.}
Assume the statement holds for every instance $\mathcal{I}'=((I'_1,\ldots,I'_r),b)$ with
$I'_j=(I,S'_j,k_j)$ such that $\mu(\mathcal{I}')<\mu(\mathcal{I})$, provided there exists a witness
tuple $(X_1^\star,\ldots,X_r^\star)$ that extends $(S'_1,\ldots,S'_r)$, satisfies $|X_j^\star|=k_j$
for all $j\in[r]$, and has $\D(X_1^\star,\ldots,X_r^\star)\ge b$.

\smallskip
\noindent\textbf{Induction step.}
Assume $\mu(\mathcal{I})>0$. Let $j'\in\{0,1,\ldots,r\}$ be the largest index such that the greedy
phase successfully constructs $X_1,\ldots,X_{j'}$. If $j'=r$, then the greedy phase succeeds for all
indices and the claim follows from \Cref{lem:greedy-success}. Thus assume $j'<r$ and let $i:=j'+1$.

Let
\[
X \;:=\; \bigcup_{t=1}^{j'} X_t
\]
be the set of elements selected so far. The greedy phase fails at index $i$, meaning that the oracle
call for $I_i$ fails with forbidden set
\[
{\sf Forb}
\;:=\;
\Bigl(\Bigl(\bigcup_{j\in[r]} S_j\Bigr)\ \cup\ X\Bigr)\setminus S_i .
\]
Equivalently, there is \emph{no} feasible solution $Y\in\mathcal{F}_I$ of size $k_i$ such that
$Y\cap{\sf Forb}=\emptyset$.

On the other hand, by assumption there exists a witness tuple $(X_1^\star,\ldots,X_r^\star)$ extending
$(S_1,\ldots,S_r)$ with $\D(X_1^\star,\ldots,X_r^\star)\ge b$. We claim that
\[
X_i^\star \cap {\sf Forb}\neq \emptyset .
\]
Indeed, if $X_i^\star\cap{\sf Forb}=\emptyset$, then $X_i^\star$ is a feasible size-$k_i$ solution
that avoids ${\sf Forb}$, contradicting the failure of the oracle call. Since ${\sf Forb}$ is defined
with $S_i$ removed, any element of $X_i^\star\cap{\sf Forb}$ automatically lies in $X_i^\star\setminus S_i$.
Thus there exists an element $x\in{\sf Forb}$ such that $x\in X_i^\star\setminus S_i$.

The algorithm branches on each choice $q\in{\sf Forb}$ by forming a new instance
$\mathcal{I}_q=((I_{1,q},\ldots,I_{r,q}),b)$ in which only the $i$-th component is updated:
\[
I_{i,q}=(I,\,S_i\cup\{q\},\,k_i),
\qquad\text{and}\qquad
I_{\ell,q}=I_\ell\ \text{ for all }\ell\neq i.
\]
Consider the branch corresponding to $q=x$. Then $(X_1^\star,\ldots,X_r^\star)$ extends the updated
tuple of suggested sets (since $x\in X_i^\star$), and the measure strictly decreases:
\[
\mu(\mathcal{I}_x)
\;=\;
\mu(\mathcal{I})-1
\;<\;
\mu(\mathcal{I}).
\]
Therefore, $\mathcal{I}_x$ satisfies the induction hypothesis, and so the recursive call on this
branch returns a tuple $(X'_1,\ldots,X'_r)$ with $\D(X'_1,\ldots,X'_r)\ge b$ and $|X'_j|=k_j$ for all
$j\in[r]$. Since the algorithm explores all branches $q\in{\sf Forb}$, it will eventually explore the
branch $q=x$ and thus return a tuple of diversity at least $b$.

This completes the induction.
\end{proof}

\subsection{Proof of Lemma~\ref{lemma:generalise diversity}}
\begin{proof}
Recall that for two sets $A$ and $B$, the notation $A \triangle B$ denotes their symmetric difference.
Let $A_i,B_i,C_i,J_i,K_i,L_i$ for each $i \in [r]$ be as in Definition~\ref{def:less_diverse}.
We begin by establishing a basic inequality on pairwise symmetric differences.

\begin{claim}
For any $i,j \in [r]$, we have $|A_i \triangle A_j| \le |J_i \triangle J_j|$.
\end{claim}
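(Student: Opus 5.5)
The plan is to turn the claim into a statement about intersections, using the fact that the sizes are preserved coordinatewise. The case $i=j$ is trivial, since both sides are $0$, so fix distinct $i,j\in[r]$. For any two finite sets $X,Y$ we have $|X\triangle Y| = |X|+|Y|-2|X\cap Y|$. Condition~1 of Definition~\ref{def:less_diverse} gives $|A_i|=|J_i|$ and $|A_j|=|J_j|$. It therefore suffices to show
\[
|J_i\cap J_j| \;\le\; |A_i\cap A_j|,
\]
and I will in fact prove the containment $J_i\cap J_j\subseteq A_i\cap A_j$.

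To establish the containment, expand the intersection using the decompositions $J_i=K_i\uplus L_i$ and $J_j=K_j\uplus L_j$:
\[
J_i\cap J_j \;=\; (K_i\cap K_j)\cup(K_i\cap L_j)\cup(L_i\cap K_j)\cup(L_i\cap L_j).
\]
The last term is empty by condition~2, since $i\neq j$. For the two mixed terms, use $K_j\subseteq B_j$ from condition~1 to get $L_i\cap K_j\subseteq L_i\cap B_j$, which is empty by condition~3. Symmetrically, $K_i\cap L_j\subseteq B_i\cap L_j=\emptyset$. Hence $J_i\cap J_j = K_i\cap K_j\subseteq B_i\cap B_j\subseteq A_i\cap A_j$, where the last inclusion holds because $B_\ell\subseteq A_\ell$ for each $\ell$.

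Substituting this into the identity gives $|J_i\triangle J_j| = |A_i|+|A_j|-2|J_i\cap J_j| \ge |A_i|+|A_j|-2|A_i\cap A_j| = |A_i\triangle A_j|$, which proves the claim. There is no real obstacle here. The only point needing care is that condition~3 must be applied in both orders (to $L_i$ against $B_j$ and to $L_j$ against $B_i$) to eliminate both mixed terms.

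The claim then feeds into the rest of Lemma~\ref{lemma:generalise diversity}. Summing the claim over all pairs gives consistency for \msd, and taking the minimum over pairs gives it for \mmd. For \mcd I would argue separately: the sets $L_i$ are pairwise disjoint and disjoint from $\bigcup_\ell B_\ell\supseteq\bigcup_\ell K_\ell$. Moreover $|L_i|\ge|C_i|$, because $|K_i|\le|B_i|$ and $|A_i|=|J_i|$. Together these give $|\bigcup_i J_i| = |\bigcup_i K_i|+\sum_i|L_i|$, and I would then compare this with $|\bigcup_i A_i|\le|\bigcup_i B_i|+\sum_i|C_i|$.
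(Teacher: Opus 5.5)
Your proof is correct and follows the paper's argument essentially step for step. You reduce to $|J_i\cap J_j|\le|A_i\cap A_j|$ using $|A_i|=|J_i|$, expand $J_i\cap J_j$ into four pieces, eliminate the three pieces involving an $L$-set via conditions~2 and~3 together with $K_\ell\subseteq B_\ell$, and conclude $J_i\cap J_j=K_i\cap K_j\subseteq A_i\cap A_j$. Your explicit handling of $i=j$ is a small improvement, since the paper's appeal to condition~2 tacitly requires $i\neq j$. As an aside beyond this claim, your coverage sketch needs the exact identity $|L_i|=|C_i|+|B_i\setminus K_i|$, not just $|L_i|\ge|C_i|$, because $|\bigcup_i K_i|$ can be strictly smaller than $|\bigcup_i B_i|$.
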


\begin{claimproof}
Recall that for any two sets $S, T$, the size of their symmetric difference is given by $|S \triangle T| = |S| + |T| - 2|S \cap T|$. 
Since $|A_i| = |J_i|$ and $|A_j| = |J_j|$, it suffices to show that $|J_i \cap J_j| \le |A_i \cap A_j|$. 

Expanding the intersection $J_i \cap J_j$ using the decomposition $J_i = K_i \uplus L_i$:
\[
J_i \cap J_j = (K_i \cap K_j) \cup (K_i \cap L_j) \cup (L_i \cap K_j) \cup (L_i \cap L_j).
\]
By the conditions of Definition~\ref{def:less_diverse}:
\begin{itemize}
    \item $L_i \cap L_j = \emptyset$ for $i \neq j$.
    \item $L_i \cap B_j = \emptyset$, and since $K_j \subseteq B_j$, we have $L_i \cap K_j = \emptyset$.
    \item Similarly, $K_i \cap L_j = \emptyset$.
\end{itemize}
Thus, $J_i \cap J_j = K_i \cap K_j$. Since $K_i \subseteq B_i \subseteq A_i$ and $K_j \subseteq B_j \subseteq A_j$, we have:
\[
|J_i \cap J_j| = |K_i \cap K_j| \le |B_i \cap B_j| \le |A_i \cap A_j|.
\]
This confirms $|A_i \triangle A_j| \le |J_i \triangle J_j|$.
\end{claimproof}
Using the above claim, we now show that \msd and \mmd are consistently diverse.

\begin{itemize}
\item For \msd, we have
\[
\begin{aligned}
d_{\mathrm{sum}}(A_1,\ldots,A_r)
&= \sum_{1 \le i < j \le r} |A_i \triangle A_j| \\
&\le \sum_{1 \le i < j \le r} |J_i \triangle J_j| \\
&= d_{\mathrm{sum}}(J_1,\ldots,J_r).
\end{aligned}
\]

\item For \mmd, we obtain
\[
\begin{aligned}
d_{\mathrm{min}}(A_1,\ldots,A_r)
&= \min_{1 \le i < j \le r} |A_i \triangle A_j| \\
&\le \min_{1 \le i < j \le r} |J_i \triangle J_j| \\
&= d_{\mathrm{min}}(J_1,\ldots,J_r).
\end{aligned}
\]
\end{itemize}

It remains to consider \mcd.
By Condition~2 of Definition~\ref{def:less_diverse}, the sets $L_i$ for $i \in [r]$ are pairwise disjoint.
Moreover, by Conditions~1 and~3, each $L_i$ is disjoint from $\bigcup_{j \in [r]} K_j$.
Consequently, we obtain
\[
\begin{aligned}
d_{\mathrm{cov}}(J_1,\ldots,J_r)
&= \left| \bigcup_{i \in [r]} K_i \right|
   + \sum_{i \in [r]} (|J_i| - |K_i|) \\
&\ge \left| \bigcup_{i \in [r]} K_i \right|
   + \left| \bigcup_{i \in [r]} (A_i \setminus K_i) \right| \\
&\ge d_{\mathrm{cov}}(A_1,\ldots,A_r).
\end{aligned}
\]

Finally, all three measures are polynomially computable.
Both \msd and \mmd can be computed by evaluating symmetric differences over all pairs of sets,
while \mcd can be computed by taking the union of all sets and computing its size.
\end{proof}

\end{document}